\newtheorem{theorem}{Theorem}
\theoremstyle{definition}
\newtheorem{definition}{Definition}[section]
\patchcmd{\ps@pprintTitle}{\footnotesize\itshape
	Preprint submitted to \ifx\@journal\@empty Elsevier
	\else\@journal\fi\hfill\today}{\relax}{}{}
\begin{document}
\begin{frontmatter}
	
	\title{On nontrivially knotted closed trajectories and positive
		lower bound for the energy}

	%% Group authors per affiliation:
	\author{Kaveh Mohammadi}
	\ead{kmsmath@aut.ac.ir}

	%% or include affiliations in footnotes:

	\begin{abstract}
		In this paper we suggest a soluiton to this question Does the existence of
		any nontrivially knotted closed trajectory for a field presents a positive lower
		bound for the energy and hence blocks the relaxation of the field into arbi-
		trarily small energies[1]
	\end{abstract}
	
	\begin{keyword}
		Volume-preserving diffeomorphism, Nontrivially knotted
		trajectories, Helicity
	\end{keyword}
	
\end{frontmatter}

\begin{keyword}
Volume-preserving diffeomorphism\sep Nontrivially knotted trajectories	\sep Heliccity
%% keywords here, in the form: keyword \sep keyword

%% PACS codes here, in the form: \PACS code \sep code

%% MSC codes here, in the form: \MSC code \sep code
%% or \MSC[2008] code \sep code (2000 is the default)

\end{keyword}

%% \linenumbers

%% main text
\section{introduction}
In this paper we suggest a soluiton to this question Does the existence of
any nontrivially knotted closed trajectory for a field presents a positive lower
bound for the energy and hence blocks the relaxation of the field into arbitrarily
small energies

One of the main question in the area of topological fluid dynamics which still stays unsolved is this question does the existence of any nontrivially knotted closed trajectories yield a positive lower bound for the energy hence intercept the relaxation of the filed to arbitrarily small energies\cite{boris2005}.

Fluid inside stars is fully conducting, in 1942 Hanns Alfven assert this idea that magnetic field lines are frozen inside fluid and they move with the movement of the fluid, this idea is well known as Alfven frozen-in theorem.ownig to Maxwell's equation conducting fluid inside the stars continuous to move  up until surplus of energy is entirely disappeared .the mutual linking of the field line (trajectories) intercepts the full disappearance of the magnetic energy.~\cite{Ricca}
\par 
I this paper we will suggest a conjecture to this challenging question.our conjecture is inspired and motivated by the work  of Sakharov and Zeldovich. They had proposed a solution for a divergence-free filed which trajectories of it are all closed and pairwise unlinked.and their solution was proved twenty years later 
by M.fredman\cite{boris2005,both}
\par 
This paper is organized a follows in the second part we review some important concept  .in the third part we state our conjecture to the challenging question. 
\section{Some important concept}
\begin{theorem}
If $M_1$ and $M_2$ are two compact manifold in $\mathbb{R}^n$ which posses smooth boundaries,they have the same volume moreover are diffeomorphic hence there exist a volume presrving diffeomorphims $\phi M_1\rightarrow M_2$, then satisfying $\phi^*\Omega=\Omega$\cite{EduardZehnder}
\end{theorem}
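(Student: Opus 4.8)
The plan is to reduce the statement to Moser's deformation argument (``Moser's trick''), which compares two volume forms on one fixed manifold, and then to glue in the given diffeomorphism.

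First I would fix an arbitrary diffeomorphism $\psi\colon M_1\to M_2$ (one exists by hypothesis) and pull the problem back to $M_1$: put $\omega_0:=\Omega|_{M_1}$ and $\omega_1:=\psi^{*}\Omega$. After precomposing $\psi$ with an orientation-reversing reflection on each component where it happens to reverse orientation, both $\omega_0$ and $\omega_1$ are volume forms inducing the same orientation on $M_1$, and the change-of-variables formula together with $\mathrm{vol}(M_1)=\mathrm{vol}(M_2)$ gives $\int_{M_1}\omega_1=\int_{M_2}\Omega=\int_{M_1}\omega_0$. It then suffices to produce a diffeomorphism $\chi\colon M_1\to M_1$ with $\chi^{*}\omega_1=\omega_0$, because $\phi:=\psi\circ\chi$ will satisfy $\phi^{*}\Omega=\chi^{*}\psi^{*}\Omega=\chi^{*}\omega_1=\omega_0=\Omega$.

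To build $\chi$ I would interpolate, setting $\omega_t:=(1-t)\omega_0+t\omega_1$ for $t\in[0,1]$; each $\omega_t$ is a positive combination of forms defining the same orientation, hence a volume form. Since $M_1$ is compact and $\int_{M_1}(\omega_1-\omega_0)=0$, the $n$-form $\omega_1-\omega_0$ is exact, and one can moreover choose a primitive $(n-1)$-form $\alpha$ with $d\alpha=\omega_1-\omega_0$ vanishing along $\partial M_1$ --- this is the Dacorogna--Moser sharpening of the Poincar\'e lemma, and it is exactly here that the equal-volume hypothesis is used. Define the time-dependent vector field $X_t$ by $\iota_{X_t}\omega_t=-\alpha$; because $\alpha$ vanishes on $\partial M_1$, the field $X_t$ is tangent to the boundary, so its flow $\phi_t$ is defined for all $t\in[0,1]$ and carries $M_1$ onto itself. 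Cartan's formula then yields
\[
\frac{d}{dt}\big(\phi_t^{*}\omega_t\big)=\phi_t^{*}\Big(\mathcal{L}_{X_t}\omega_t+\tfrac{d}{dt}\omega_t\Big)=\phi_t^{*}\big(d\,\iota_{X_t}\omega_t+(\omega_1-\omega_0)\big)=\phi_t^{*}\big(-d\alpha+d\alpha\big)=0,
\]
since $d\omega_t=0$; hence $\phi_1^{*}\omega_1=\phi_0^{*}\omega_0=\omega_0$, and $\chi:=\phi_1$ is the desired diffeomorphism.

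I expect the main obstacle to be the boundary. On a closed manifold the equality of the two integrals is precisely the obstruction to exactness of $\omega_1-\omega_0$ and the Moser flow exists automatically, but when $\partial M_1\neq\varnothing$ one must produce a primitive $\alpha$ with controlled boundary behaviour so that the flow never pushes points out of $M_1$; securing such a primitive (a relative de Rham / Hodge-theoretic solvability statement, or a direct appeal to \cite{EduardZehnder}) is the technical heart. A minor secondary point is connectedness: if $M_1$ is disconnected one also needs the volumes of corresponding components to agree under $\psi$, which should be folded into the hypothesis or arranged by an initial adjustment of $\psi$.
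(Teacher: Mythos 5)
Your proposal is correct: it is the standard Moser deformation argument (interpolate the volume forms, solve $d\alpha=\omega_1-\omega_0$ with $\alpha$ vanishing on $\partial M_1$, integrate the vector field $\iota_{X_t}\omega_t=-\alpha$), and the paper itself offers no proof of this statement, merely citing the literature where essentially this same argument appears. The two caveats you flag yourself --- that an orientation-reversing $\psi$ cannot always be corrected by a self-diffeomorphism of $M_1$, and that for disconnected $M_1$ the volumes must agree component by component --- are genuine gaps in the theorem as stated rather than in your argument, and are reasonably folded into the hypotheses.
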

\begin{definition}
A knot is a subspace of $\mathbb{R}^3$ which is homeomorphic to the circle.
there are different types of knots,the simplest knot is called unknot or trivial knot which is a cricle embeded in $\mathbb{R}^3$ and the simplest example of nontrivial knots are terfoil knot ,figure eight knot.\cite{Colin}
\end{definition}
On a three dimensional Euclidean space the helicity of a magnetic field attributes a number to the linking of the field line with one another, moreover, helicity computes the typical linking of the field line or their mutual bonding.
V.arnold in 1973 defined the helicity without the use of metric in other words he defined helicity in terms of a topological propetry which preserved under the action of volume-preserving diffeomorphism \cite{both, Arnold}
\section{Statement of the  theorem}
Assume that the magnetic field is confined to the object (Fig.1) in which field line (trajectories) are nontrivially linked and, assume that field vanishes outside the object. to decrease the energy in a significant way, the length of most trajectories need to be lessened. After applying the suitable diffeomorphism the length of most trajectories will be lessened, however, because of the action of volume preserving diffeomorphism, the shape of the object will be deformed into a fat object with the same volume.
moreover, the mutual linking of the field lines(trajectories) will be preserved too. consequently, the energy of the magnetic field is decreased significantly.
\begin{theorem}
The energy of a magnetic field inside a 3-dimensional object can be decreased significantly by the action of a suitable diffeomorphism that preserves the volume and mutual linking of the field line (trajectories).
\end{theorem}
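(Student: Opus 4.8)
The plan is to exhibit the energy decrease by constructing an explicit volume-preserving deformation of the supporting object and to quantify the decrease through the geometry of magnetic flux tubes. First I would fix notation: let $B$ be a divergence-free vector field on a compact domain $M_1\subset\mathbb{R}^3$ with smooth boundary, tangent to $\partial M_1$, and write the magnetic energy as $E(B)=\int_{M_1}|B|^2\,dV$. As a model case I would take $B$ supported in a thin tubular neighborhood of a single (possibly nontrivially knotted) closed trajectory $\gamma$ carrying flux $\Phi$, with tube length $L$ and cross-sectional area $A$, so that the tube volume is $V\approx LA$. A standard computation then gives $E(B)\approx \Phi^2 L/A=\Phi^2 L^2/V$, so that, at fixed flux and fixed volume, the energy is controlled by the square of the length of the core curve.

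Next I would invoke Theorem~1. Let $M_2$ be a ``fat'' domain, for concreteness a round ball, of the same volume as $M_1$; since $M_1$ and $M_2$ are compact with smooth boundary and diffeomorphic, Theorem~1 supplies a volume-preserving diffeomorphism $\phi:M_1\to M_2$ with $\phi^{*}\Omega=\Omega$. More to the point, I would build $\phi$ from a one-parameter family $\phi_t$, obtained by Moser's trick from a shrinking ambient isotopy of the core curve, so that the pushed-forward field $\widetilde{B}=\phi_{*}B$ is supported in a tube whose core curve $\widetilde{\gamma}=\phi(\gamma)$ is as short as the topology permits, i.e.\ close to the ropelength-minimizing (ideal) representative of the knot type of $\gamma$, for which $\operatorname{length}(\widetilde{\gamma})\le c\,V^{1/3}$ with $c$ depending only on that knot type. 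Because each $\phi_t$ is volume preserving, the tube automatically fattens so that the volume is unchanged; because each $\phi_t$ is an ambient diffeomorphism, the trajectories of $\widetilde{B}$ are precisely the images of the trajectories of $B$, so all pairwise linking numbers are preserved, and so is the helicity, which by Arnold's theorem is invariant under volume-preserving diffeomorphisms.

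Combining the two steps, $E(\widetilde{B})\approx \Phi^2\operatorname{length}(\widetilde{\gamma})^2/V\le c^2\Phi^2 V^{-1/3}$, while $E(B)\approx \Phi^2 L^2/V$; since the original trajectory may be arbitrarily long with $V$ held fixed, the ratio $E(\widetilde{B})/E(B)\approx(\operatorname{length}(\widetilde{\gamma})/L)^2$ can be made as small as desired, which is the claimed significant decrease. Note that the same estimate simultaneously exhibits the positive lower bound $E\gtrsim \Phi^2 V^{-1/3}$ below which no such deformation can push the energy, which is exactly the phenomenon in the motivating question. To remove the single-thin-tube restriction I would approximate a general divergence-free field, tangent to $\partial M_1$, in energy by a finite superposition of flux tubes along a generic finite family of closed trajectories, and apply the shrinking isotopy to all of them compatibly.

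The main obstacle is the construction in the second paragraph: producing an honest family of volume-preserving diffeomorphisms of $M_1$ that shortens every, or almost every, trajectory toward its tight configuration without self-overlap and with uniform control of the flux-tube cross-sections, and then justifying the estimate $E\sim \Phi^2 L^2/V$ uniformly along the deformation rather than only for idealized cylindrical tubes. A secondary difficulty is the passage from the flux-tube model to arbitrary divergence-free fields, whose trajectories need not be closed and may fill sets of positive measure; there one must replace the length of a core curve by a suitable averaged functional and show that it, too, can be lowered by a volume-preserving deformation while the helicity is kept fixed.
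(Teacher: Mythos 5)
Your approach is genuinely different from the paper's, and the difference is substantive. The paper proves the statement by a shell decomposition: it cuts the object $C$ into a thin boundary shell $A_\epsilon$ and an inner piece $C_{1-\epsilon}$, applies a family $\varphi_t$ of volume-preserving maps squeezing every horizontal section of the inner piece so that the inner energy is driven down to any prescribed level, and then argues that the shell's energy is bounded independently of its thickness and hence can be made negligible by taking the shell thin; the conclusion is that the total energy can be made as small as one likes. You instead work in the flux-tube model, use the scaling $E\approx \Phi^2 L^2/V$ at fixed flux and volume, and shorten the core curve by a volume-preserving ambient isotopy toward the ropelength-minimizing representative of its knot type. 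What your route buys is precision about how far the decrease can go: the ropelength bound gives $\mathrm{length}(\widetilde\gamma)\ge c\,V^{1/3}$ for a nontrivial knot, hence the lower bound $E\gtrsim \Phi^2 V^{-1/3}$ that you note. This is the Moffatt/Freedman--He obstruction, and it is exactly what the paper's squeezing step ignores: one cannot contract all orbits of a nontrivially knotted tube ``to a desired prescribed factor'' inside a fixed volume, so the paper's argument (which is essentially Freedman's proof of the Sakharov--Zeldovich conjecture for \emph{unlinked} fields) does not survive the hypothesis that the trajectories are nontrivially linked. Your version proves the literal statement (a significant, but bounded, decrease) while answering the motivating question affirmatively, whereas the paper's proof would answer it negatively.

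That said, two steps of your sketch are real work rather than routine: (i) producing the volume-preserving isotopy that carries the tube to a near-ideal configuration with uniform control of the cross-sections --- Moser's theorem (Theorem 1) gives volume preservation of the ambient map but not the cross-sectional estimates needed to run $E\approx\Phi^2 L^2/V$ uniformly along the deformation; and (ii) the passage from a single flux tube to a general divergence-free field whose trajectories need not be closed and need not organize into tubes carrying flux --- a single knotted closed trajectory has measure zero and carries no flux, and extending the lower bound to that setting is precisely the open problem quoted in the introduction, so it cannot be dispatched by approximating with a finite superposition of flux tubes.
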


\begin{proof}

~\cite{zeldovich}
\cite{both}.In order to restore the boundary condition, several actions need to be applied to the object(boundary constraint is removed in the case of the star). first, we divide the object $C$ into a small shell $A_\epsilon$ of thickness b and $C_{1-\epsilon}$ a smaller object inside the shell. The shell $A_\epsilon$ is extremely stretchable when the smaller object inside is removed it will be squeezed into a smaller shell $A_{\epsilon^\prime}$.when the smaller-object is removed, we will squeeze it in a way that it can be fit inside the smaller shell $A_\epsilon$ the action which turned $C_{1-\epsilon}$ into $C_{1-\xi^\prime}$ is an ambient $C^\infty$.after applying the action, we have a diffeomorphism $f:A_\epsilon\rightarrow A_{\epsilon^\prime} $ and the push forward field $f_\star B=B_1$ its energy can be expressed as 
\[\int_{A_\epsilon}||B||^2=\int_{A_{\epsilon^\prime}}||B_1||^2+\int_{G}||B_1||^2 \]

such action of squeezing shrinks the $\xi$-orbits inside the $C_{1-\xi}$ to a desired prescribed factor, therefore it reduces the field energy in the (transformed) smaller object $C_1-\xi$ to the desired level. squeezing $C_1-\xi$ into $C_{1-\xi^\prime}$ and embedding inside $A_\xi^\prime$ preserves the mutual bonding of the field trajectories and also reduces the energy because trajectories have been reduced to a great extent.

Now it is time to estimate the energy of the field in the image shell $A_\xi^\prime$ by applying theorem.1 in order to control the maximal squeezing of orbits in the shell, it suffices to provide boundaries of the squeezing of the volume element for an arbitrarily thin shell. 
\begin{itemize}
	\item[(a)]first expand a thin shell
	\item [(b)]lastly, squeeze this neighborhood to K.
	
\end{itemize}
As thikness approaches zero the enegry $A_\xi^\prime$ tend to zero because energy integrated is bounded independtly of b.if b is chosen sufficiently small, so $A_\xi^\prime$$=\approx \xi $. defining the initial squeezing of the sub-object $C_1-\xi$ into a smaller sub-object $C_1-\xi^\prime$ of length $\ell$.necessary  to organize a family $\varphi _t$ of diffeomorphism sso the area of every horizontal section is squeezed by the factor of t as a result total energy will be reduced to $C(\varphi^\star_t\xi)\approx \frac{1}{t}$. hence, the supremum norm $||\varphi^\star_t||_{\ell^\infty}=max ||\xi_t||=\mathcal{O}(t)$

\end{proof}

\begin{figure}[H]
\centering
\begin{multicols}{2}
	\includegraphics[scale=0.3]{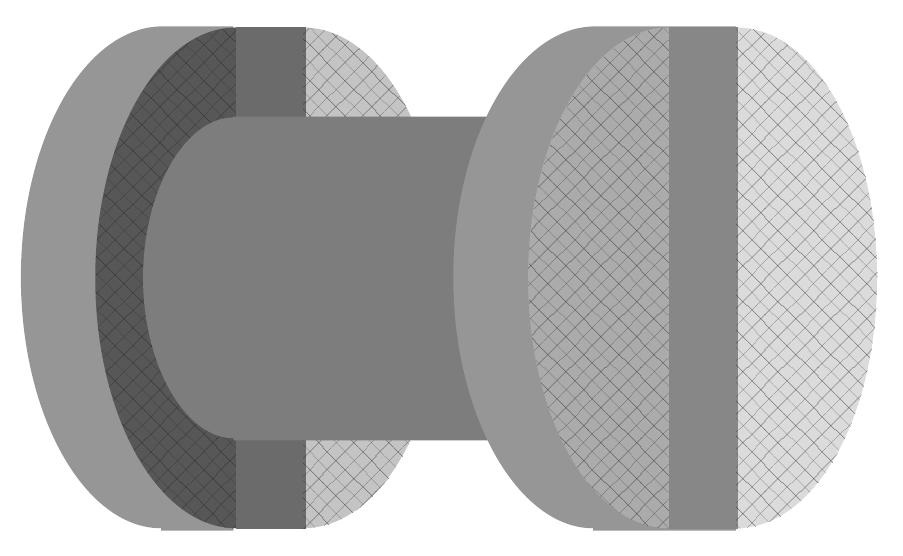}
	\vspace{10mm}\\
	
	\includegraphics[scale=0.35]{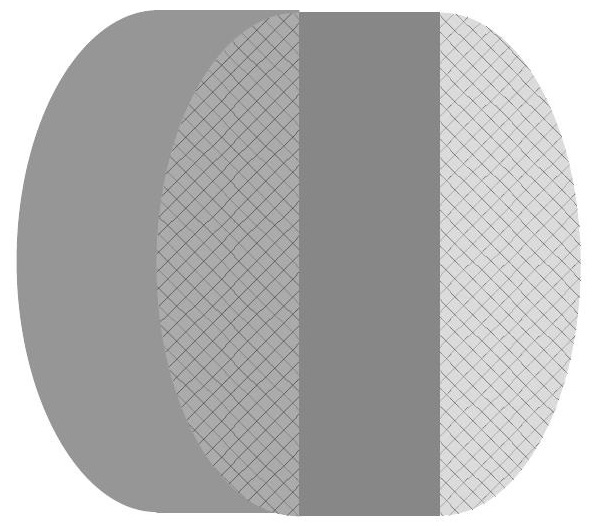}
	
	\includegraphics[scale=0.3]{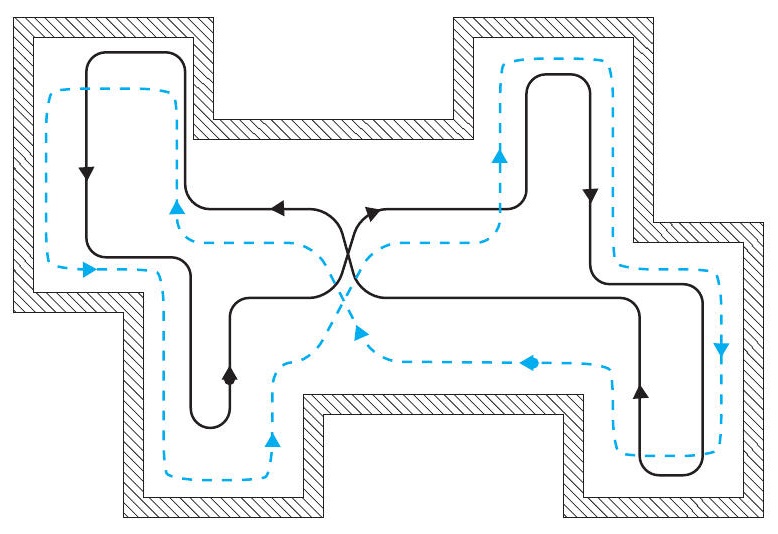}
	\vspace{20mm}\\
	
	\includegraphics[scale=0.3]{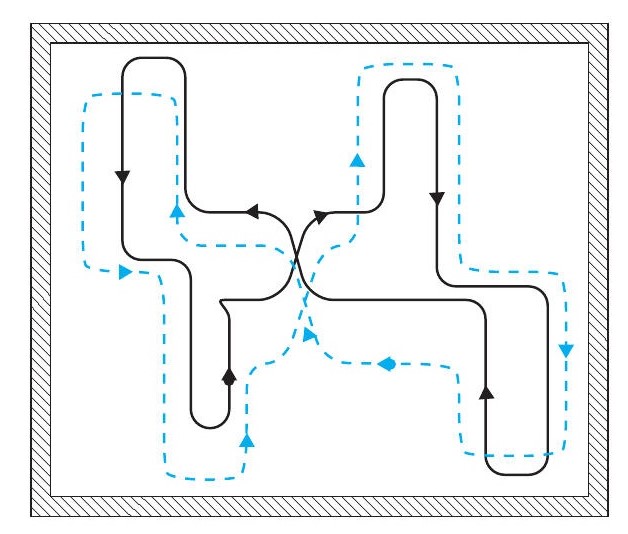}
\end{multicols}
\caption{(a)3 dimensional object which will be turn ito (b)by a an action of volume preserving diffeomorphism}
\label{3fig}
\end{figure}

\section*{References}

\bibliography{mybibfile}

\end{document}